\title{A Note on the Modeling Power of Different Graph Types}
\author{
\href{https://orcid.org/0000-0001-7166-9973}{Josephine Thomas}$^{a,1,}\thanks{Corresponding Author}$, \href{https://orcid.org/0000-0001-7912-0969}{Alice Moallemy-Oureh}$^{a,1}$, \href{https://orcid.org/0000-0002-2984-2119}{Silvia Beddar-Wiesing}$^{a,1}$, \href{https://orcid.org/0000-0002-1038-4242}{Rüdiger Nather}$^{a,1}$\\
${^a}$GAIN - Graphs in Artificial Intelligence and Neural Networks,\\ Faculty of Computer Science and Electrical Engineering,\\ 
University of Kassel, Wilhelmshöher Allee 71-73, 34121 Kassel, Germany\\
Email addresses: \{jthomas, amoallemy, s.beddarwiesing, r.nather\}@uni-kassel.de\\
\textbf{}\\
${^1}$Contributed equally\\
}
\newtheorem{thm}{Theorem}[section]
\newtheorem{defn}[thm]{Definition}
\newtheorem{lem}[thm]{Lemma}
\newtheorem{rem}[thm]{Remark}
\newtheorem{exmp}[thm]{Example}
\newtheorem*{thm*}{Theorem}
\newtheorem*{lem*}{Lemma}
\newtheoremstyle{rmrk}{3pt}{3pt}{}{0pt}{\itshape}{:}{.5em}{}
\theoremstyle{rmrk}
\newtheorem*{rem*}{\underline{Remark}}
\newcommand{\hypset}[1]{\ensuremath{\mathsf{#1}}}
\newcommand{\abb}{\ensuremath{\mathbin{:}}}
\newcommand{\set}[1]{\mathcal{#1}}
\begin{document}
\maketitle

\begin{abstract}
Graphs can have different properties that lead to several graph types and may allow for a varying representation of diverse information. In order to clarify the modeling power of graphs, we introduce a partial order on the most common graph types based on an expressivity relation. The expressivity relation quantifies how many properties a graph type can encode compared to another type. Additionally, we show that all attributed graph types are equally expressive and have the same modeling power.
\end{abstract}

\keywords{graph expressivity $\And$ graph representation $\And$ information encoding}

\section{Introduction}
Let $g=(\set{V},\set{E})$ be a simple graph. The attributed version $g'=(\set{V},\set{E},\alpha,\omega)$ with attribute mappings $\alpha:\set{V}\rightarrow \set{A}$ and $\omega:\set{E}\rightarrow\set{B}$ into attribute spaces $\set{A},\set{B}$ may intuitively express more information than the unattributed graph. However, there are several different graph types, for example, heterogeneous graphs, multigraphs, hypergraphs, directed or undirected graphs, and the question arises: \textit{which graph type can express how much information? }

Irrespective of the graph type, two graphs $g_1 = (\set{V}_1, \set{E}_1), g_2 = (\set{V}_2, \set{E}_2)$ were compared in standard literature by checking if they are structurally isomorphic \cite{jour_grohe_2020}, i.e., finding bijections $f_1: \set{V}_1 \rightarrow \set{V}_2, f_2: \set{E}_1 \rightarrow \set{E}_2$ between their node and edge sets. For the attributed case, the equality of two graphs requires additional bijections between the attribute sets of nodes and edges, respectively. 
In this note, however, two graphs are not distinguished by their structure, but the modeling power of different graph types are analyzed, i.e., the number of graph properties that can be encoded by sets of graphs that differ in their structure by definition. We call the relation \textit{graph type expressivity} and define it as follows.

\begin{defn}[Graph Type Expressivity]
  \label{defn_expressivity}
  A graph type $\set{G}_2$ is \textbf{at least as {expressive}} as a graph type $\set{G}_1$, if and only if $\set{G}_2$ encodes at least as many graph properties as $\set{G}_1$ denoted as $\set{G}_1\preccurlyeq \set{G}_2$. In case both types encode the same graph properties, i.e., $\set{G}_1\preccurlyeq \set{G}_2 \wedge \set{G}_2\preccurlyeq \set{G}_1$, it is denoted as $\set{G}_1 \approx \set{G}_2$.
\end{defn}
The fact that the graph type expressivity induces a partial order leads to the following theorems for unattributed graphs \ref{theorem_unattributed}, partially attributed graphs \ref{theorem_partially_attributed} and attributed graphs \ref{theorem:attributesGraphTYpesEquallyExpressive}.

\begin{thm}\label{theorem_unattributed}
    Let $\set{G}_{h}, \set{G}_{\bar{h}}, \set{G}_{d}$ and $\set{G}_{het}$ be the sets of hypergraphs, non-hypergraphs, directed and heterogeneous graphs. Then it holds $\set{G}_{h}\approx \set{G}_{\bar{h}}$, and $\set{G}_d \preccurlyeq \set{G}_{het}$.
\end{thm}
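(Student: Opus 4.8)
The plan is to reduce each expressivity comparison to the construction of an injective, property-preserving \emph{encoding} between the two graph types involved: to establish $\set{G}_1\preccurlyeq\set{G}_2$ I will exhibit a map that sends every type-$\set{G}_1$ graph to a type-$\set{G}_2$ graph in such a way that no encoded graph property is lost, so that every property the type $\set{G}_1$ can encode is also realized by some graph of type $\set{G}_2$. For the equivalence $\approx$ I then need such an encoding in both directions, whereas for $\preccurlyeq$ a single direction suffices. Throughout, I will argue at the level of the properties counted by Definition~\ref{defn_expressivity} rather than by structural isomorphism.

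For $\set{G}_{h}\approx\set{G}_{\bar h}$, one direction is immediate: a non-hypergraph is literally the special case of a hypergraph in which every hyperedge has cardinality two, so the identity inclusion gives $\set{G}_{\bar h}\preccurlyeq\set{G}_{h}$. The substantive direction is $\set{G}_{h}\preccurlyeq\set{G}_{\bar h}$, and here I would use the incidence (König) representation: given a hypergraph $(\set{V},\set{E})$, I form the ordinary graph on the vertex set $\set{V}\cupdot\set{E}$ whose edges join each hyperedge $e\in\set{E}$ to exactly the vertices it contains. This replaces each hyperedge of arbitrary arity by a pattern of size-two adjacencies, and I would verify that the map is injective and preserves the properties tracked by the expressivity relation. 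Combining this with the inclusion yields $\set{G}_{h}\approx\set{G}_{\bar h}$.

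For $\set{G}_{d}\preccurlyeq\set{G}_{het}$, I would simulate orientation by heterogeneity. A directed edge $(u,v)$ carries oriented information that a plain undirected structure cannot, but a heterogeneous graph can record it through its type function: either by assigning each arc an oriented edge type together with a distinguished reverse type, or, more robustly, by subdividing each arc with a typed auxiliary node whose type marks the head and tail roles. Either encoding is injective and records exactly the information that direction supplies, so every property encodable by a directed graph is encodable by a heterogeneous one, giving $\set{G}_{d}\preccurlyeq\set{G}_{het}$.

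The main obstacle is the faithfulness of the hypergraph encoding in the direction $\set{G}_{h}\preccurlyeq\set{G}_{\bar h}$: in the bare incidence graph the two sides of the bipartition $\set{V}\cupdot\set{E}$ are not intrinsically labeled, so I must argue that the vertex/hyperedge distinction can be recovered from the ordinary graph alone, or that the expressivity relation is insensitive to it, so that no two distinct hypergraphs collapse to the same ordinary graph and no encoded property is silently dropped. By contrast, the inclusion $\set{G}_{\bar h}\preccurlyeq\set{G}_{h}$ and the directed-to-heterogeneous encoding are comparatively routine once the reformulation of $\preccurlyeq$ in terms of property-preserving encodings is in place.
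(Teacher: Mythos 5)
Your treatment of $\set{G}_d\preccurlyeq\set{G}_{het}$ is fine and is essentially the paper's own argument: lem.~\ref{lem_directed_heterogeneous} encodes each arc $(u,v)$ by a symmetric pair of typed edges, with edge types $0$ and $1$ marking which orientation(s) were present, which is your first option (edge types simulating orientation). Likewise, your inclusion $\set{G}_{\bar h}\preccurlyeq\set{G}_{h}$ matches the paper's. The substantive difference is in $\set{G}_{h}\preccurlyeq\set{G}_{\bar h}$, where you propose the K\"onig incidence representation on $\set{V}\cupdot\set{E}$, whereas the paper (lem.~\ref{eq:hyper}) keeps the vertex set fixed and expands each hyperedge in place: hyperedges with numbering $f_i\equiv 0$ become cliques, and hyperedges with nontrivial numbering become chains of bicliques, with $(u,v)$ an edge precisely when $f_i(v)=f_i(u)+1$.

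Here your proposal has a genuine gap. In this paper a hypergraph is a \emph{generalized directed} hypergraph (def.~\ref{defn:StaticGraphsElm}[2]): every hyperedge is a pair $(\hypset{x},f_i)$ carrying a numbering $f_i\abb\hypset{x}\rightarrow\mathbb{N}_0$ that records an ordering of its nodes. Your incidence graph joins the hyperedge-node $e$ to exactly the vertices it contains and nothing else, so it discards $f_i$ entirely: two hypergraphs whose hyperedges have the same underlying vertex sets but different numberings are sent to the same image, so the encoding is neither injective nor property-preserving — it loses precisely the directional information that the generalized definition was introduced to capture, and which the paper's biclique-chain construction is designed to retain. In addition, the obstacle you flag (recovering which side of $\set{V}\cupdot\set{E}$ is which) is acknowledged but never resolved, and since non-hypergraphs here have node set $\set{V}\subset\mathbb{N}$ you would also need a canonical injection of the hyperedge-nodes into $\mathbb{N}$. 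To repair the argument you would have to enrich the incidence graph (e.g., direct or subdivide the incidence edges so as to encode the values of $f_i$, and mark the bipartition), or simply adopt the paper's in-place expansion; as written, the proposal does not establish $\set{G}_{h}\preccurlyeq\set{G}_{\bar h}$.
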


\begin{thm}\label{theorem_partially_attributed}
Let $\set{G}_{a}, \set{G}_{ia}$ be the sets of attributed and integer-attributed graphs, and $\set{G}_{\bar{a}}, \set{G}_{\bar{a}, het}, \set{G}_{\bar{a}, m}$ be the sets of unattributed graphs, unattributed heterogeneous and unattributed multigraphs. Then it holds $\set{G}_{\bar{a}} \preccurlyeq \set{G}_a,\ \set{G}_{a} \approx \set{G}_{\bar{a}, het}$, and $\set{G}_{ia} \approx \set{G}_{\bar{a}, m}$.
\end{thm}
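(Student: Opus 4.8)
The plan is to prove the three relations separately, in each case by exhibiting an explicit structure‑preserving encoding between the two graph types and then invoking the characterization of $\preccurlyeq$ in terms of the set of encodable graph properties. To establish $\set{G}_1 \preccurlyeq \set{G}_2$ I would construct an injection from the $\set{G}_1$‑structures into the $\set{G}_2$‑structures under which every property expressible on a source graph is matched by a property expressible on its image; to obtain $\approx$ I would produce two such encodings that are mutually inverse at the level of encodable properties. The first relation $\set{G}_{\bar{a}} \preccurlyeq \set{G}_a$ is the short inclusion direction: every unattributed graph $(\set{V},\set{E})$ can be read as the attributed graph $(\set{V},\set{E},\alpha,\omega)$ with constant attribute maps $\alpha,\omega$, so every structural property of the unattributed graph is also carried by its attributed image, giving $\set{G}_{\bar{a}}\subseteq \set{G}_a$ and hence $\set{G}_{\bar{a}}\preccurlyeq \set{G}_a$.

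For $\set{G}_a \approx \set{G}_{\bar{a},het}$ the key idea is that attributes and node/edge types are interchangeable encodings of the same information. First I would map an attributed graph to the heterogeneous graph whose node‑type and edge‑type functions are exactly the attribute maps $\alpha$ and $\omega$, taking the type sets to be $\set{A}$ and $\set{B}$; this shows $\set{G}_a \preccurlyeq \set{G}_{\bar{a},het}$. Conversely, I would read the node‑ and edge‑type functions of a heterogeneous graph as attribute maps, yielding $\set{G}_{\bar{a},het}\preccurlyeq \set{G}_a$. Since the two constructions are mutually inverse, $\approx$ follows.

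For $\set{G}_{ia} \approx \set{G}_{\bar{a},m}$ the key idea is that an integer attribute value encodes a multiplicity. I would send an integer‑attributed graph to the multigraph obtained by replacing every element (node or edge) carrying the integer $k$ with $k$ parallel copies, and conversely collapse each maximal family of parallel copies in a multigraph to a single element whose integer attribute records the multiplicity. These operations are mutually inverse and preserve the encodable properties, giving $\approx$.

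The main obstacle I expect lies in this last relation, together with some cardinality bookkeeping in the middle one. For $\set{G}_{ia}\approx\set{G}_{\bar{a},m}$ I must make the correspondence between integer \emph{node}‑attributes and node multiplicities precise against the paper's definition of a multigraph, since most multigraph definitions carry multiplicities only on edges; the delicate point is therefore to verify that the expansion and collapse maps are well defined on both nodes and edges and genuinely inverse, and to confirm that the integer attributes range over the non‑negative integers so that a multiplicity is always realizable. For $\set{G}_a\approx\set{G}_{\bar{a},het}$ I must ensure that the heterogeneous type sets are permitted to be as large as the attribute spaces $\set{A},\set{B}$, which may be infinite, so that no attribute distinction is lost when passing to types; it is precisely this cardinality match that upgrades the argument from a one‑sided $\preccurlyeq$ to the full equivalence $\approx$.
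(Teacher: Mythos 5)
Your proposal takes essentially the same route as the paper, which proves this theorem via three lemmas using exactly your three encodings: constant (blank-symbol) attribute maps for $\set{G}_{\bar{a}} \preccurlyeq \set{G}_a$, attributes recast as node/edge types for $\set{G}_a \preccurlyeq \set{G}_{\bar{a},het}$, and integer attributes realized as node/edge multiplicities for $\set{G}_{ia} \preccurlyeq \set{G}_{\bar{a},m}$. If anything you are more explicit than the paper, whose lemmas state only the forward $\preccurlyeq$ directions and assert that the $\approx$ claims follow immediately, while you spell out the reverse encodings (types read back as attributes, parallel copies collapsed to multiplicity counts); your two worries are settled by the paper's definitions, since its multigraphs allow multisets of nodes as well as edges, and its integer attributes map into $\mathbb{N}$ so every attribute value is a realizable multiplicity.
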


\begin{thm}\label{theorem:attributesGraphTYpesEquallyExpressive}
  All attributed graph types are equally expressive.
\end{thm}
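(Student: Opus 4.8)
The plan is to prove that $\approx$ collapses every attributed graph type into a single equivalence class by exhibiting, for each attributed type, a pair of inequalities against one fixed canonical representative, and then invoking transitivity of $\approx$ (which is available because $\preccurlyeq$ is a partial order by Definition~\ref{defn_expressivity}). The guiding principle, already visible in Theorem~\ref{theorem_partially_attributed} where attributes absorb heterogeneity ($\set{G}_a \approx \set{G}_{\bar{a}, het}$) and multiplicity ($\set{G}_{ia}\approx\set{G}_{\bar{a}, m}$), is that once a type carries attribute mappings it can encode any further structural feature as data; hence no attributed type can encode strictly more properties than the simplest one.

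First I would fix $\set{G}_{\star}$ to be the attributed simple undirected homogeneous graph, the least structured attributed type, and fix $\set{G}_{\max}$ to be the richest attributed type, carrying direction, node- and edge-types, parallel edges, and hyperedges simultaneously. The inclusion $\set{G}_{\star}\preccurlyeq\set{G}_{\max}$ is immediate, since every simple attributed graph is a degenerate instance of $\set{G}_{\max}$, so every property expressible by $\set{G}_{\star}$ is expressible by $\set{G}_{\max}$. It therefore suffices to establish the reverse inequality $\set{G}_{\max}\preccurlyeq\set{G}_{\star}$: any intermediate attributed type $\set{G}_X$ then satisfies $\set{G}_{\star}\preccurlyeq\set{G}_X\preccurlyeq\set{G}_{\max}\preccurlyeq\set{G}_\star$ and is squeezed into the same class, yielding $\set{G}_X\approx\set{G}_\star$.

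To obtain $\set{G}_{\max}\preccurlyeq\set{G}_{\star}$ I would give a lossless encoding that folds each structural feature into the attribute mappings $\alpha,\omega$ of a simple graph, treating the features one at a time. Directedness is recorded by an orientation flag in the edge attribute space $\set{B}$; heterogeneity by storing the node- and edge-type labels inside $\set{A}$ and $\set{B}$; and multiplicity by replacing a bundle of parallel edges with a single edge whose attribute is the multiset of the originals, exactly as in the $\set{G}_{ia}\approx\set{G}_{\bar{a},m}$ correspondence. The delicate case is the hyperedges, and I expect this to be the main obstacle, because a hyperedge may be incident to arbitrarily many nodes and thus has no direct image as a single two-endpoint edge. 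Here I would pass to the incidence (star) expansion: each hyperedge becomes an auxiliary node flagged by a distinguished value in $\set{A}$ and joined by simple edges to its members, with the incidence data and the hyperedge's own attribute stored on this auxiliary node. One must then check that this expansion is reversible, i.e.\ that the flag lets one recover exactly which nodes are auxiliary and reconstruct the original incidence relation without collision, so that the induced map on property sets is injective and no properties are lost.

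Finally, having shown $\set{G}_{\max}\approx\set{G}_{\star}$, I would close the argument by combining this squeeze with Theorems~\ref{theorem_unattributed} and~\ref{theorem_partially_attributed}: every attributed type sits between $\set{G}_\star$ and $\set{G}_{\max}$ in the $\preccurlyeq$ order, so all of them collapse to the single class of $\set{G}_\star$, which is precisely the assertion that all attributed graph types are equally expressive. The only real content is the losslessness of the feature-by-feature encoding, and within that, verifying that the hyperedge-to-incidence construction neither merges distinct graphs nor invents spurious structure.
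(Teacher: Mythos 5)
Your overall strategy is sound and, at its core, the same as the paper's: structural features are folded into the attribute maps one at a time, and the resulting equivalences are combined via transitivity of $\preccurlyeq$. The organizational difference is real but modest: the paper proves pairwise equivalences between attributed types differing in a single property (lem.~\ref{eq:directed} for direction, lem.~\ref{eq:multi} for multiplicity, lem.~\ref{eq:heterogeneous} for heterogeneity, lem.~\ref{eq:hyper} for hyperedges) and then closes the ring, whereas you squeeze every type between a minimal type $\set{G}_{\star}$ and a maximal type $\set{G}_{\max}$. Your star-expansion of hyperedges (auxiliary flagged nodes) also departs from the paper's clique/biclique translation in lem.~\ref{eq:hyper}; it is a legitimate alternative and arguably more robustly invertible, since clique expansions of distinct hypergraphs can coincide unless extra data (the paper's numbering functions $f_i$) is carried along.

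However, there is a concrete gap: you omit the dynamic property. By def.~\ref{defn:StaticStructuralProp}[5], dynamic graphs are among the graph types the theorem quantifies over, and the paper devotes lem.~\ref{lem_dynamic_static} to showing $\set{G}_{dy}\approx\set{G}_{\overline{dy}}$ precisely because this case is not subsumed by the others. Your $\set{G}_{\max}$ carries direction, node- and edge-types, parallel edges and hyperedges, but no time component, so a dynamic attributed graph --- a family $(\set{V}_t,\set{E}_t,\alpha_t,\omega_t)_{t\in[T]}$ --- is not a degenerate instance of $\set{G}_{\max}$, and your claim that ``every attributed type sits between $\set{G}_{\star}$ and $\set{G}_{\max}$'' fails for it; the squeeze never touches dynamic types. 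The repair is routine with your own machinery: add time-dependence to $\set{G}_{\max}$ and fold the stamps into attributes, e.g.\ $\alpha'(v) = \bigl((\alpha_t(v),t)\mid v\in\set{V}_t\bigr)$ over the union of the node sets, which is exactly the paper's lem.~\ref{lem_dynamic_static}. But as written, your proof does not establish the theorem for all graph types the paper counts as such.
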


The paper is structured as follows. In sec.~\ref{sec_foundations} the different graph types and properties are defined. The proof of thm.~\ref{theorem_unattributed} is given in sec.~\ref{section_proofs_unattributed}, the proof of \ref{theorem_partially_attributed} can be found in sec.~\ref{section_proofs_partially_attributed} and thm.~\ref{theorem:attributesGraphTYpesEquallyExpressive} is proven in sec.~\ref{section_proofs_attributed}. For the composition of graph types, a note is given in sec.~\ref{sec_expressivity_composed_graphs} . Finally, the paper is concluded in sec.~\ref{sec:conclusion}.

\section{Preliminaries}
\label{sec_foundations}

\begin{defn}[Elementary Graphs]\label{defn:StaticGraphsElm}
  \leavevmode
  \begin{compactenum}
  \item A \textbf{directed graph} is a tuple $G=(\set{V}, \set{E})$ with $\set{V}\subset\mathbb{N}$ and $\set{E}\subseteq\set{V}\times\set{V}$. The set of all directed graphs is denoted here as $\set{G}_d$.
  \item A \textbf{(generalized) directed hypergraph} is defined as $G=(\set{V},\set{E})$ with nodes $\set{V}\subset\mathbb{N}$ and hyperedges \mbox{$\set{E}\subseteq \{(\hypset{x},f_i)_i\mid \hypset{x}\subseteq\set{V},\, f_i\abb\hypset{x}\rightarrow\mathbb{N}_0\}$} including a numbering of the nodes in the hyperedge.\footnote{W.l.o.g.~it can be assumed that the numbering is gap-free, so if there exists a node $u\in\hypset{x}$ with ${f(u)=k>1}$ then there will also exist a node $v$ s.t.~$f(v)=k-1$.} 
  The set of all (generalized) directed hypergraphs is denoted as $\set{G}_h$.
  \end{compactenum}
\end{defn}

\begin{figure}[H]
\begin{minipage}{.45\linewidth}
\begin{rem}
  Def.~\ref{defn:StaticGraphsElm}[2] differs from the common definition of a directed hypergraph with edges ${\set{E}\subseteq\{(\hypset{x},\hypset{y})\mid \hypset{x},\hypset{y}\subseteq \set{V}\}}$ \cite{book_bretto_2013}. We obtained a more generalized definition by introducing a numbering mapping for each hyperedge that indicates an ordering of the nodes in a hyperedge. On a \textbf{generalized hyperedge} we defined the ordering by equipping the node set $\hypset{x}$ with a function $f:\hypset{x}\rightarrow\mathbb{N}$ s.t.~for ${u,v\in\hypset{x} : u \leq v \Leftrightarrow f(u)\leq f(v)}$. With this, the common notion of a directed hyperedge $(\hypset{x},\hypset{y})$ can be depicted by mapping the nodes in $\hypset{x}$ to $1$ and the nodes in $\hypset{y}$ to $2$.
  \end{rem}
\end{minipage}\hfill
\begin{minipage}{.5\linewidth}
 \centering\scriptsize
 \def\svgwidth{\textwidth}
  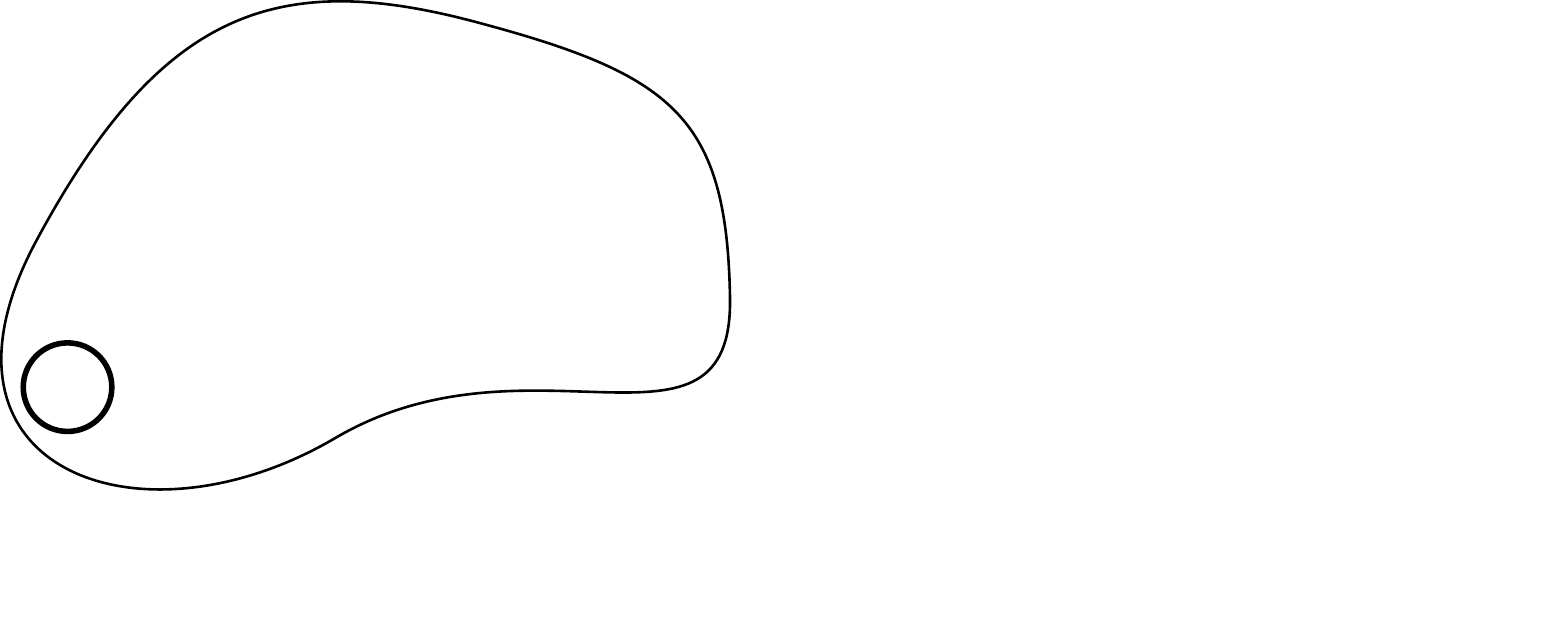
  \caption[hypIntro]{Left: Two generalized hyperedges with numbering. Right: The same hyperedges with arrows indicating the ordering of the nodes.}
\end{minipage}
\end{figure}


\begin{defn}[Graph Properties]\label{defn:StaticStructuralProp}
\leavevmode
  An elementary graph $G = (\set{V}, \set{E})$ is called
  \begin{compactenum}
  \item \textbf{undirected} if the directions of the edges are irrelevant, i.e.,
    \begin{compactenum}
    \item for directed graphs: if $(u,v) \in \set{E}$ whenever $(v,u)\in\set{E}$ for $u,v\in\set{V}$.
    \item for directed hypergraphs: if $f_i\abb\hypset{x}\rightarrow 0$ for all $(\hypset{x},f_i)_i\in\set{E}$. Abbreviated by $\set{E}\subseteq\{ \hypset{x}\mid \hypset{x}\subseteq\set{V}\}$.
    \end{compactenum}
    In what follows $\set{G}_{u}$ is the set of all undirected graphs.
  \item \textbf{multigraph} if it is a multi-edge graph, i.e., the edges $\set{E}$ are defined as an (ordered) multiset, denoted by $\{\!|\cdot |\!\}$, a multi-node graph, i.e., the node set $\set{V}$ is an (ordered) multiset, or both. All multigraphs are written as the set $\set{G}_m$.
  \item \textbf{heterogeneous} if the nodes or edges can have different types (node- or edge-heterogeneous). I.e., the node set is determined by $\set{V}\subseteq\mathbb{N}\times \set{S}$ with a node type set $\set{S}$ and the edges are extended by the edge type set $\set{R}$. The set of all heterogeneous graphs is denoted as $\set{G}_{het}$.
  \item \textbf{attributed} if the nodes $\set{V}$ or edges $\set{E}$ are equipped with node- or edge attributes. These attributes are given by a node and edge attribute function, respectively, i.e., $\alpha:\set{V}\rightarrow\set{A}$ and $\omega:\set{E}\rightarrow \set{W}$, where $\set{A}$ and $\set{W}$ are arbitrary attribute sets. 
    The set of all attributed graphs is denoted as $\set{G}_a$.
    \item \textbf{dynamic} if the graph structure or the graph properties are time dependent. In the following, the notion $G_i = (\set{V}_i, \set{E}_i), \;t_i \in T$ is used, where $T$ is a finite set of (not necessarily equidistant) timestamps to emphasize the time-dependence and therefore the dynamics.
  \end{compactenum}
\end{defn}

In the literature there are many graph types composed from the above definitions of graph properties, but the resulting graph types are not always named accordingly. Therefore, in the following important composed graphs are listed and named according to the definitions from above.
\begin{exmp}[Composed Static Graphs]\label{defn:StaticGraphsComb}
\leavevmode
  \begin{compactenum}
  \item \textbf{Knowledge graphs} are defined in several ways. In \cite{jour_wang_2021}, they are defined as heterogeneous directed graphs, while in \cite{inproc_zhu_2020} knowledge graphs are heterogeneous graphs. But some definitions that do not define knowledge graphs as a graph consisting of a combination from the aforementioned types, see  \cite{inproc_ehrlinger_2016}.
  \item A \textbf{multi-relational graph} \cite{jour_hamilton_2020} is an edge-heterogeneous but node-homo-\ geneous graph.
  \item A \textbf{content-associated heterogeneous graph} is a heterogeneous graph with node attributes that correspond to  heterogeneous data such as, e.g., attributes, text or images \cite{inproc_zhang_2019}. 
  \item A \textbf{multiplex graph}/\textbf{multi-channel graph} corresponds to an edge-heterogeneous graph with self-loops \cite{jour_hamilton_2020}. Here, we have $k$ layers, where each layer consists of the same node set $\set{V}$, but different edge sets $\set{E}^{(k)}$. Additionally, inter-layer edges $\tilde{\set{E}}$ exist between the same nodes across different layers.
  \item A \textbf{multiscale graph} is a multiplex graph with edges between different nodes of differing layers \cite{inproc_li_2020}.
  \item A \textbf{spatio-temporal graph} is a multiplex graph where edges per each layer are interpreted as spatial edges and the inter-layer edges indicate temporal steps between a layer at time step $t$ and $t+1$. They are called temporal edges \cite{arx_kapoor_2020}.	
  \end{compactenum}
\end{exmp}


\begin{rem}
  Graphs can have further additional semantic properties such as connectedness, (a-)cyclicity, scale-freeness, etc. However, these are independent from the basic graph type. Since we consider graphs syntactically in this article, we do not go into further graph properties that result from interpretations of the given graph structure. For further readings see \cite{book_korte_2012}.
\end{rem}


\section{Modeling Power of Unattributed Graphs}\label{section_proofs_unattributed}

\begin{lem}\label{eq:hyper}
  The set of directed hypergraphs and non-hypergraphs are equally expressive, i.e.,
    $\set{G}_{h}\approx \set{G}_{\overline{h}}$.
  
  \begin{proof}
    Let $g\in\set{G}_{h}$, $g'\in\set{G}_{\bar{h}}$ be an arbitrary hyper- and non-hypergraph of the forms 
    $g = (\set{V}, \set{E})$ with  $\set{V}\subset\mathbb{N},\; {\set{E}\subseteq \{(\hypset{x},f_i)_i\mid \hypset{x}\subseteq\set{V},\, f_i\abb\hypset{x}\rightarrow\mathbb{N}_0\}}$ and $g'=(\set{V}',\set{E}')$ with $\set{V}'\subset \mathbb{N}$ and $\set{E}'\subseteq \set{V}'\times \set{V}'.$


    
    \begin{compactenum}
    \item[$\preccurlyeq$:] Let $f\abb \set{G}_h\rightarrow \text{Im}(f)\subseteq\set{G}_{\bar{h}}$ be a mapping with $f(g)=g'=(\set{V}, \set{E}')$ and $\set{E}' = \set{E}^\star \cup \set{E}^{\star\star}$, with
      \begin{compactenum}
      \item the translation of undirected hyperedges to fully connected subgraphs (cliques):\\ ${\set{E}^\star = \{(u,v)\mid \exists (\hypset{x},f_i)_i\in \set{E}\abb f_i(w)=0\,\forall w\in\hypset{x}\wedge u,v\in \hypset{x}\}}$ as illustrated in fig.~\ref{fig:hyperedges} a),
      \item the translation of directed hyperedges to chains of fully connected bipartite subgraphs (bicliques):\\ ${\set{E}^{\star\star} = \{(u,v)\mid \exists(\hypset{x}, f_i)_i\in \set{E}\abb u,v\in\hypset{x}\wedge f_i(v)=f_i(u)+1 \}}$ as illustrated in fig.~\ref{fig:hyperedges} b).
      \end{compactenum}
      Then $f$ is bijective, while the injectivity is insured by the ordering functions $f_i$ of the corresponding edges.
    
    \item[$\succcurlyeq$:] This direction is straight forward since every non-hypergraph is per definition a hypergraph.
    
    \end{compactenum}

    \begin{figure}[h!]
      \centering
      \includegraphics{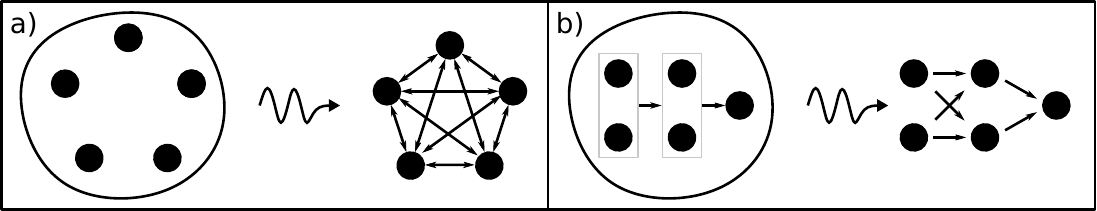}
      \caption{embedding of hyperedges to directed edges: a) undirected hyperedge, b) directed hyperedge.}
      \label{fig:hyperedges}
    \end{figure}
  \end{proof}
\end{lem}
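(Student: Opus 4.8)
The plan is to verify the two relations $\set{G}_{\bar{h}}\preccurlyeq\set{G}_{h}$ and $\set{G}_{h}\preccurlyeq\set{G}_{\bar{h}}$ separately and then conclude $\set{G}_{h}\approx\set{G}_{\bar{h}}$ by Definition~\ref{defn_expressivity}. Following the operational reading of expressivity used throughout, I would witness each relation $\set{G}_1\preccurlyeq\set{G}_2$ by an injective encoding $\set{G}_1\hookrightarrow\set{G}_2$, i.e.\ a map under which distinct graphs of the first type remain distinguishable as graphs of the second.

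The relation $\set{G}_{\bar{h}}\preccurlyeq\set{G}_{h}$ is the routine direction, since every ordinary directed graph already \emph{is} a directed hypergraph in the sense of Definition~\ref{defn:StaticGraphsElm}[2]: an edge $(u,v)$ is the generalized hyperedge $(\{u,v\},f)$ with $f(u)=1,\ f(v)=2$, and an undirected edge is the case $f\equiv 0$. This inclusion is trivially injective, so nothing beyond bookkeeping is required here.

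For the harder direction $\set{G}_{h}\preccurlyeq\set{G}_{\bar{h}}$, I would construct an encoding $f\abb\set{G}_{h}\to\set{G}_{\bar{h}}$ that fixes the node set $\set{V}$ and replaces each hyperedge by ordinary directed edges according to its numbering. An undirected hyperedge ($f_i\equiv 0$) is expanded into the symmetric clique $\set{E}^\star$ on its node set, and a directed hyperedge is expanded into the directed bicliques $\set{E}^{\star\star}$ joining each level $k$ to level $k+1$, exactly as depicted in fig.~\ref{fig:hyperedges}. Setting $f(g)=(\set{V},\set{E}^\star\cup\set{E}^{\star\star})$ then yields an element of $\set{G}_{\bar{h}}$, and the gap-free numbering assumption records the level structure faithfully.

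The main obstacle is showing that $f$ is injective. The delicate point is that decomposing hyperedges into pairwise edges discards the information about which edges once belonged together, so a priori several distinct hyperedge families could collapse onto the same ordinary edge set. I would therefore argue that the ordering functions $f_i$ make the hyperedges recoverable: reading off the maximal directed level-chains (and the maximal symmetric cliques in the undirected case) is intended to reconstruct the node set and ordering of every original hyperedge, whence $f$ becomes a bijection onto its image. This reconstruction is precisely where the numbering introduced in Definition~\ref{defn:StaticGraphsElm}[2] is indispensable, and I expect the care to lie in confirming that no two inequivalent hyperedge decompositions produce identical cliques or biclique-chains; once that is settled, combining both directions gives $\set{G}_{h}\approx\set{G}_{\bar{h}}$.
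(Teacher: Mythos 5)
Your proposal follows exactly the route of the paper's own proof: the same encoding $f(g)=(\set{V},\set{E}^\star\cup\set{E}^{\star\star})$ that expands undirected hyperedges into symmetric cliques and directed hyperedges into level-$k$-to-level-$(k{+}1)$ bicliques, and the same trivial inclusion for the converse direction. The only divergence is in how the injectivity of $f$ is handled: the paper flatly asserts that it is ``insured by the ordering functions $f_i$,'' whereas you correctly identify it as the delicate point and propose to recover each hyperedge from the image by reading off maximal cliques and maximal level-chains, leaving that reconstruction unverified.

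That step you flagged is a genuine gap, and it cannot be closed: the map $f$ as defined (both by you and by the paper) is simply not injective. Take $\set{V}=\{1,2,3\}$ and compare the hypergraph $g_1$ consisting of the single undirected hyperedge $\hypset{x}=\{1,2,3\}$ with the hypergraph $g_2$ consisting of the three undirected hyperedges $\{1,2\}$, $\{2,3\}$, $\{1,3\}$. Both are sent to the full symmetric clique on $\{1,2,3\}$, so $f(g_1)=f(g_2)$ while $g_1\neq g_2$; your maximal-clique reading would wrongly ``reconstruct'' $g_2$ as $g_1$. A directed analogue exists as well: the single hyperedge on $\{1,2,3,4\}$ with $f_i(1)=f_i(2)=1$, $f_i(3)=f_i(4)=2$ and the four two-node directed hyperedges $\{1,3\},\{1,4\},\{2,3\},\{2,4\}$ (each with levels $1\mapsto 2$) produce the identical edge set $\{(1,3),(1,4),(2,3),(2,4)\}$. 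The information about which pairwise edges belong to a common hyperedge is irrecoverably discarded, so establishing $\set{G}_h\preccurlyeq\set{G}_{\bar h}$ requires a different encoding that tags co-membership --- for instance an incidence-graph construction introducing one auxiliary node per hyperedge --- rather than the clique/biclique expansion. Note that this criticism applies verbatim to the paper's published proof, which glosses over precisely the point you had the good instinct to isolate.
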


\begin{lem}\label{lem_directed_heterogeneous}
Let $\set{G}_d$ be the set of directed graphs and $\set{G}_{het}$ the set of undirected heterogeneous graphs. Then $\set{G}_d \preccurlyeq \set{G}_{het}$.
\end{lem}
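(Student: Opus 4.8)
The plan is to mirror the strategy of Lemma~\ref{eq:hyper} and exhibit an injective map $f\abb \set{G}_d \to \mathrm{Im}(f)\subseteq \set{G}_{het}$ from directed graphs into undirected heterogeneous graphs, keeping the node set fixed so that $f(g)=(\set{V},\set{E}')$. By Definition~\ref{defn_expressivity}, producing such an embedding whose image lets us recover the original directed structure shows that $\set{G}_{het}$ encodes at least as many properties as $\set{G}_d$, i.e.\ $\set{G}_d \preccurlyeq \set{G}_{het}$. Only this single direction is required by the statement.

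The central difficulty is that a directed edge $(u,v)$ carries an orientation (an ordered pair), whereas an undirected heterogeneous edge is an unordered pair $\{u,v\}$ together with a type in $\set{R}$; the type alone is symmetric in its two endpoints and therefore cannot by itself record which endpoint is the source. The observation that resolves this is that the vertices are labelled by natural numbers, $\set{V}\subset\mathbb{N}$, so every unordered pair $\{u,v\}$ already comes with a canonical order $\min(u,v)<\max(u,v)$. I would therefore use the edge type only to record the orientation \emph{relative} to this canonical order. Concretely, fix an edge type set $\set{R}=\{r_+,r_-,r_0\}$ and define $\set{E}'$ by sending each $(u,v)\in\set{E}$ with $u<v$ to the undirected edge $\{u,v\}$ of type $r_+$, each $(u,v)$ with $u>v$ to $\{u,v\}$ of type $r_-$, and each self-loop $(u,u)$ to $\{u,u\}$ of type $r_0$.

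Next I would verify the two things the argument needs. First, $f(g)$ genuinely lies in $\set{G}_{het}$: its edges are unordered pairs carrying a type from $\set{R}$, so it is edge-heterogeneous and undirected in the sense of Definition~\ref{defn:StaticStructuralProp}. Note that if both $(u,v)$ and $(v,u)$ occur in $\set{E}$, they are mapped to the two \emph{distinct} typed edges $\{u,v\}$ of type $r_+$ and of type $r_-$, which are different elements of the edge set, so no multi-edge is created and we stay within $\set{G}_{het}$ rather than in $\set{G}_m$. Second, $f$ is injective: from any edge $\{a,b\}$ with $a<b$ we recover the directed edge $(a,b)$ if its type is $r_+$, the edge $(b,a)$ if its type is $r_-$, and $(a,a)$ from a self-loop of type $r_0$; applying this inverse rule edge by edge reconstructs $\set{E}$ uniquely from $\set{E}'$, so distinct directed graphs have distinct images.

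I expect the only real obstacle to be the symmetry-breaking just described, and I would flag one robustness point: the construction orients edges via the labels in $\mathbb{N}$, exactly as the proof of Lemma~\ref{eq:hyper} freely used the concrete node set and its ordering functions. If one preferred an encoding that does not appeal to the vertex labelling, the same result follows from an alternative ``edge-subdivision'' map that introduces, for each $(u,v)$, a fresh auxiliary node of a dedicated node type joined to $u$ by an edge of type ``tail'' and to $v$ by an edge of type ``head''; recovering the unique source and target of each auxiliary node again yields injectivity. Either construction establishes $\set{G}_d \preccurlyeq \set{G}_{het}$.
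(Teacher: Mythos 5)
Your proposal is correct and follows essentially the same route as the paper: keep the node set fixed, turn each directed edge into an undirected (symmetrized) edge, and record its orientation in the edge type, so that $\set{G}_d \preccurlyeq \set{G}_{het}$ follows from the injectivity of this encoding. Your version is in fact slightly more careful than the paper's, which uses types $0$ and $1$ but leaves the symmetry-breaking convention (orientation read off relative to the natural order on $\set{V}\subset\mathbb{N}$) implicit and does not treat self-loops, both of which you handle explicitly via the types $r_+, r_-, r_0$.
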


\begin{proof}
Let $g := (\set{V},\set{E}) \in\set{G}_d$ be a directed graph with $\set{V}\subset\mathbb{N},\ \set{E}\subseteq\set{V}\times\set{V}$. 
Then the function ${f: \set{G}_d\rightarrow \text{Im}(f)\subseteq \set{G}_{het}}$ defined by $ f(g) = g' := (\set{V}, \set{E}')$ with $\set{E}' = \{(u,v,0), (v, u, 0)\mid (u,v)\in \set{E}\}\cup \{(u,v,1), (v, u, 1)\mid (v,u)\in \set{E}\}$ is a bijective embedding.
\end{proof}

From lem.~\ref{eq:hyper} and lem.~\ref{lem_directed_heterogeneous} follows directly thm.~\ref{theorem_unattributed}.

\section{Modeling Power between Attributed and Unattributed Graphs}\label{section_proofs_partially_attributed}

\begin{lem}\label{lemma:nonattributet_2_attributed}
  Let $G_{a}$ be the sets of attributed graphs and $\set{G}_{\bar{a}}$ the set of unattributed graphs. Then it holds
    $\set{G}_{\bar{a}}\preccurlyeq\set{G}_{a}$.
  \begin{proof}
    Let $g\in\set{G}_{\bar{a}}$ be an arbitrary graph of the form 
    $g = (\set{V}, \set{E}) \text{ with }\set{V}\subset\mathbb{N},\; \set{E}\subseteq\set{V}\times\set{V}$. Then the function $f\abb \set{G}_{\bar{a}} \longrightarrow \set{G}_a$, defined by 
    \begin{align*}
      (\set{V}, \set{E}) &\longmapsto (\set{V}, \set{E}, \alpha, \omega), \; \text{ with } \; 
                           \alpha\abb\set{V} \longrightarrow \set{A} \cup \{\sigma\}, \; \omega\abb\set{E} \longrightarrow \set{B} \cup \{\sigma\},\\
      \alpha(v) &= \sigma \; \forall \; v \in \set{V} \; \text{ and } \;
                  \omega(e) = \sigma \; \forall \; e \in \set{E}.
    \end{align*}
    where $\sigma$ is the blank symbol, is a bijective embedding from the unattributed graph into the attributed graph type.
  \end{proof}
\end{lem}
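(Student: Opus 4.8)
The plan is to establish $\set{G}_{\bar{a}}\preccurlyeq\set{G}_{a}$ in exactly the manner the earlier lemmas establish an expressivity relation: by exhibiting a bijective embedding $f\abb\set{G}_{\bar{a}}\rightarrow\mathrm{Im}(f)\subseteq\set{G}_a$. By Definition~\ref{defn_expressivity} it suffices to show that every unattributed graph can be faithfully represented as an attributed graph, so that $\set{G}_a$ encodes at least as many graph properties as $\set{G}_{\bar{a}}$.

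First I would recall the two data types: an unattributed graph $g=(\set{V},\set{E})$ carries only structure, whereas an attributed graph $g'=(\set{V},\set{E},\alpha,\omega)$ additionally carries the attribute maps $\alpha\abb\set{V}\rightarrow\set{A}$ and $\omega\abb\set{E}\rightarrow\set{W}$. The natural embedding leaves the structure untouched and assigns a single fixed ``blank'' symbol $\sigma$ to every node and every edge. To have $\sigma$ available I would enlarge the attribute spaces to $\set{A}\cup\{\sigma\}$ and $\set{W}\cup\{\sigma\}$, and then set $\alpha(v)=\sigma$ for all $v\in\set{V}$ and $\omega(e)=\sigma$ for all $e\in\set{E}$. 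This makes $f$ well defined: each $g\in\set{G}_{\bar{a}}$ is sent to a genuine attributed graph whose underlying structure is exactly that of $g$.

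Next I would verify that $f$ is a bijection onto its image. Since $f$ changes neither $\set{V}$ nor $\set{E}$, two distinct unattributed graphs already differ in their node or edge set and therefore have distinct images, so $f$ is injective. On the image the inverse is simply the forgetful map that deletes the constant attribute functions, hence $f$ is a bijective embedding and $\set{G}_{\bar{a}}\preccurlyeq\set{G}_a$ follows.

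The only genuinely delicate point is conceptual rather than computational: the relation $\preccurlyeq$ is phrased in Definition~\ref{defn_expressivity} through the informal ``encodes at least as many graph properties'', while the operative criterion used throughout is the existence of such a bijective embedding. The main thing to get right is therefore that a constant attribute assignment really preserves all structural information and does not collapse distinct graphs --- which holds precisely because $\sigma$ is a fresh placeholder disjoint from the meaningful attributes and because $f$ acts as the identity on the structural part. Everything beyond this is routine bookkeeping.
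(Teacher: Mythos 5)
Your proposal is correct and follows essentially the same route as the paper's own proof: the same blank-symbol embedding $g=(\set{V},\set{E})\mapsto(\set{V},\set{E},\alpha,\omega)$ with $\alpha\equiv\sigma$, $\omega\equiv\sigma$ into the enlarged attribute spaces, yielding a bijective embedding onto its image. Your additional remarks on injectivity and the forgetful inverse only make explicit what the paper leaves implicit.
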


The inverse direction of lem.~\ref{lemma:nonattributet_2_attributed} must not necessarily hold. The main problem lies in encoding the node and edge attributes into the graphs structure. However, for some special cases we show in the following (cf.~lem.~\ref{lem_attributed_heterogen}, \ref{lem_multigraph_unattrib}), that restricted versions of the inverse direction hold. These could be seen as indicators for the reverse direction hypothesis of lem.~\ref{lemma:nonattributet_2_attributed}.

\begin{lem}\label{lem_attributed_heterogen}An attributed graph can be embedded into an unattributed heterogeneous graph type, i.e., it holds
  $\set{G}_a \preccurlyeq \set{G}_{\bar{a},het}$.
  \begin{proof}
    Let $g\in\set{G}_a$ be an arbitrary graph with
    $g = (\set{V}, \set{E}, \alpha, \omega)$, $\set{V}\subset\mathbb{N},\ \set{E}\subseteq\set{V}\times\set{V},\ \alpha\abb \set{V}\rightarrow \mathcal{C}$ and $\omega\abb\set{E}\rightarrow\set{C}$.
    Then ${f\abb \set{G}_a \rightarrow \text{Im}(f) \subseteq \set{G}_{\bar{a},het}}$ with $f(g) = g' := (\set{V}', \set{E}')$, where $\set{V}' = \set{V} \times \alpha(\set{V})$
    $\set{E}' = \set{E} \times \beta(\set{E})$ is bijective. The injectivity follows from the usage of a node or edge type for each attribute respectively.
    \end{proof}
    \end{lem}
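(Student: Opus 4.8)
The plan is to turn attributes into types, which is natural because Definition~\ref{defn:StaticStructuralProp}[3] leaves the node type set $\set{S}$ and the edge type set $\set{R}$ completely arbitrary. Concretely I would take $\set{S} \da \alpha(\set{V})$ and $\set{R} \da \omega(\set{E})$, so that each attribute value is reused verbatim as a type. Since the type sets may be arbitrary, this even accommodates the case where $\set{A}$ or $\set{W}$ is infinite (e.g.\ continuous attribute spaces), so no information is lost in the translation.

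First I would define the map $f\abb \set{G}_a \rightarrow \set{G}_{\bar{a},het}$ on an arbitrary $g = (\set{V}, \set{E}, \alpha, \omega)$ by sending each node $v \in \set{V}$ to the typed node $(v, \alpha(v)) \in \mathbb{N} \times \set{S}$ and each edge $(u,v) \in \set{E}$ to the edge $(u,v)$ carrying the edge type $\omega((u,v)) \in \set{R}$. The resulting $g' = (\set{V}', \set{E}')$ has $\set{V}' \subseteq \mathbb{N} \times \set{S}$ and edge types drawn from $\set{R}$, which is exactly the form of an unattributed heterogeneous graph, so $f$ is well defined with image in $\set{G}_{\bar{a},het}$.

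The only substantive step is injectivity, so I would verify that $f$ is one-to-one and hence a bijection onto $\text{Im}(f)$. The key observation is that the first coordinate of every node in $\set{V}'$ retains the original index $v$, so projecting recovers the underlying structure $(\set{V}, \set{E})$, while the node and edge types recover $\alpha$ and $\omega$ pointwise. Consequently two attributed graphs with different node or edge sets, or with the same sets but different attribute functions, always have distinct images; crucially, a non-injective $\alpha$ or $\omega$ causes no collision, because each node and edge still keeps its own identity. Restricting the codomain to the image then gives the desired embedding and establishes $\set{G}_a \preccurlyeq \set{G}_{\bar{a},het}$.

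The hard part will be purely a matter of faithfulness bookkeeping: one must confirm that the encoding is simultaneously structure-faithful and type-faithful, i.e.\ that distinct underlying graphs are never conflated and distinct attributions never collapse to the same typed graph. Keeping the original indices in the node's first coordinate is precisely what resolves both concerns, since it makes the projection $\set{V}' \rightarrow \set{V}$ together with the type assignment a two-sided inverse of $f$ on its image.
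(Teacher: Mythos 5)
Your proposal is correct and follows essentially the same route as the paper: encode each node's attribute as its node type and each edge's attribute as its edge type, then observe that retaining the original node indices makes the map invertible on its image. In fact your formulation $\set{V}' = \{(v,\alpha(v)) \mid v \in \set{V}\}$ is a more precise rendering of what the paper sloppily writes as the Cartesian product $\set{V} \times \alpha(\set{V})$, and your explicit treatment of non-injective attribute functions fills in the injectivity argument the paper only sketches.
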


\begin{lem}\label{lem_multigraph_unattrib}
Every unattributed multigraph type is at least as expressive as an integer-attributed graph type. 
Formally, $G_{ia} \preccurlyeq G_{\bar{a},m}$.

\begin{proof}
Let $g\in\set{G}_{ia}$ be an arbitrary integer-attributed graph with $g = (\set{V}, \set{E}, \alpha, \omega)$, $\set{V}\subset\mathbb{N},\ {\set{E}\subseteq\set{V}\times\set{V},}\ {\alpha\abb \set{V}\rightarrow \mathbb{N}}$ and $\omega\abb\set{E}\rightarrow\mathbb{N}$. 
Then the mapping $f\abb \set{G}_{ia}\rightarrow \text{Im}(f)\subseteq\set{G}_{\bar{a},m}$ defined as $f(g) = g'=(\set{V}', \set{E}')$, where
\begin{equation*}
\set{V}'=\bigcup\limits_{i\in[\alpha(v)]\;\abb\;  v\in\set{V}} \{\!|v|\!\},\quad 
\set{E}'=\bigcup\limits_{i\in[\omega(e)]\;\abb\;  e\in\set{E}} \{\!|e|\!\}.
\end{equation*}
is a bijective embedding.
\end{proof}
\end{lem}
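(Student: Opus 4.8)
The plan is to encode each integer attribute as a multiplicity in the target multigraph: a node carrying the label $k = \alpha(v) \in \mathbb{N}$ becomes $k$ identical copies of $v$ in the node multiset, and an edge carrying $\omega(e)$ becomes that many copies of $e$ in the edge multiset. Concretely, I would take the map $f \abb \set{G}_{ia} \to \set{G}_{\bar{a},m}$ sending $g = (\set{V}, \set{E}, \alpha, \omega)$ to $g' = (\set{V}', \set{E}')$ whose node multiset contains $\alpha(v)$ copies of each $v \in \set{V}$ and whose edge multiset contains $\omega(e)$ copies of each $e \in \set{E}$, exactly as written in the statement. Since $\set{V}'$ and $\set{E}'$ are genuine (ordered) multisets over $\set{V}$ and $\set{E} \subseteq \set{V} \times \set{V}$, the image $g'$ is a well-formed unattributed multigraph in the sense of Definition \ref{defn:StaticStructuralProp}[2], so $f$ is well-defined.

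The substance of the argument is to show that $f$ is a bijection onto its image $\text{Im}(f) \subseteq \set{G}_{\bar{a},m}$, which by Definition \ref{defn_expressivity} yields $\set{G}_{ia} \preccurlyeq \set{G}_{\bar{a},m}$. Surjectivity onto the image is immediate, so it suffices to prove injectivity, and I would do this by exhibiting an explicit inverse on $\text{Im}(f)$. Given such a multigraph $g' = (\set{V}', \set{E}')$, I would recover the underlying graph as the support of the two multisets (the sets of distinct node and edge symbols) and recover the attributes from the multiplicity function: set $\alpha(v)$ to the number of occurrences of $v$ in $\set{V}'$ and $\omega(e)$ to the number of occurrences of $e$ in $\set{E}'$. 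Because $f$ never merges two distinct symbols, taking supports faithfully returns $(\set{V}, \set{E})$, and because the multiplicity of each symbol equals the attribute value assigned to it, this reconstruction returns exactly $(\alpha, \omega)$; hence $f^{-1} \circ f = \mathrm{id}$ and $f$ is injective.

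The one point that needs care---and the main obstacle---is the treatment of the attribute value $0$, since a node or edge of multiplicity $0$ would vanish from the multiset and the support could no longer distinguish ``label $0$'' from ``not present,'' breaking injectivity. I would resolve this by appealing to the convention that the integer attributes take values in $\mathbb{N}$ read as the positive integers, consistent with the use of $\mathbb{N}$ for node identifiers versus $\mathbb{N}_0$ for the hyperedge numbering in Definition \ref{defn:StaticGraphsElm}, so that every attributed node and edge survives with multiplicity at least one. With this convention the reconstruction above is total, and the remaining verification---that node multiplicities and edge multiplicities are read off independently and therefore do not interfere---is routine, completing the proof.
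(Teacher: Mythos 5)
Your construction is exactly the paper's: you encode the integer attribute $\alpha(v)$ (resp.\ $\omega(e)$) as the multiplicity of $v$ (resp.\ $e$) in the node and edge multisets, and establish bijectivity onto the image by recovering the attributes from the multiplicities. The paper's proof simply asserts this map is a bijective embedding, so your explicit inverse via supports and multiplicity counts, together with the observation that $\mathbb{N}$ must be read as the positive integers (matching the paper's use of $\mathbb{N}_0$ where zero is intended) to keep attribute-$0$ objects from vanishing, only makes the same argument more careful.
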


Lem.~\ref{lemma:nonattributet_2_attributed}, \ref{lem_attributed_heterogen} and \ref{lem_multigraph_unattrib}, lead immediately to thm.~\ref{theorem_partially_attributed}.


\begin{rem}
Lem.~\ref{eq:hyper}, \ref{lem_directed_heterogeneous}, \ref{lemma:nonattributet_2_attributed}, \ref{lem_attributed_heterogen} and \ref{lem_multigraph_unattrib} naturally hold in the fully attributed case, where the additional attributes remain unchanged.
\end{rem}


\section{Universal Modeling Power of Attributed Graphs}\label{section_proofs_attributed}

Due to the proof of  lem.~\ref{lemma:nonattributet_2_attributed} there exists an embedding function from each unattributed graph to an attributed one, encoding the same information. Therefore, it is assumed in everything that follows that the graph types in consideration are attributed. Unless otherwise mentioned the notation is kept simple and the attributedness of the graph types is not explicitly mentioned. 
Additionally, for any two graphs $(\set{V}, \set{E}, \alpha, \omega), (\set{V}', \set{E}', \alpha', \omega')$, with node attribute sets $\set{A}, \set{A}'$ and edge attribute sets $\set{B}, \set{B}'$, it can be assumed w.l.o.g.~that it is $\set{A} = \set{A}'$ and $\set{B} = \set{B}'$ \footnote{Otherwise, the attribute domains can be extended arbitrarily.}.
$\set{C}$ will be used to refer to an extended attribute space to prevent confusion. 


\begin{lem}\label{eq:directed}
  Let $\set{G}_{d}$ be the sets of attributed directed graphs and $\set{G}_{\bar{d}}$ the set of attributed undirected graphs. Then it holds
    $\set{G}_d\approx \set{G}_{\overline{d}}$.

\begin{proof}
  Let $g\in\set{G}_{d}$ be an arbitrary directed graph of the form 
  $g = (\set{V}, \set{E}, \alpha, \omega) \text{ with } \set{V}\subset\mathbb{N},$ $\; {\set{E}\subseteq\set{V}\times\set{V}}, $ $\; {\alpha\abb\set{V}\rightarrow\set{C}},\; {\omega\abb\set{E}\rightarrow\set{C}.}$
  \begin{compactenum}
  \item[$\preccurlyeq$:] Let $f\abb \set{G}_{d}\rightarrow \text{Im}(f)\subseteq\set{G}_{\bar{d}}$ be a mapping defined by $f(g) = g'=(\set{V}',\set{E}',\alpha',\omega')$, where
    \begin{compactenum}
    \item $\set{V}' = \set{V}$, \quad$\set{E}' = \{\{u,v\}\mid (u,v)\vee (v,u)\in\set{E}\}$,\quad $\alpha'(v) \mapsto \alpha(v)\ \forall v\in \set{V}'$ and
    \item $\omega'(\{u,v\}) \mapsto  
      \begin{cases}\left(\omega((u,v)), 1\right), & \text{if } (u,v)\in\set{E}\wedge (v,u)\notin\set{E},\\
        \left(\omega((v,u)), -1\right), & \text{if } (v,u)\in\set{E}\wedge (u,v)\notin\set{E},\\
        \bigl(\left(\omega((u,v)), 1\right), \left(\omega((v,u)), -1\right)\bigr) & \text{if } (u,v),(v,u)\in\set{E}.\\
      \end{cases}$
    \end{compactenum}
    Then $f$ is  bijective.
    \item[$\succcurlyeq$:] This direction is straight forward, since every undirected edge formally corresponds to two directed edges. 
  \end{compactenum}
\end{proof}
\end{lem}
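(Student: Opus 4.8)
The plan is to establish the equivalence $\set{G}_d \approx \set{G}_{\bar{d}}$ by verifying both directions of the expressivity relation, namely $\set{G}_d \preccurlyeq \set{G}_{\bar{d}}$ and $\set{G}_{\bar{d}} \preccurlyeq \set{G}_d$, each witnessed by a bijective embedding in the style of the previous lemmas. Since the only structural difference between the two types is whether edges carry an orientation, the guiding idea is that orientation is one bit of information per ordered pair, which can be shuffled back and forth between the edge set and the (extended) edge-attribute space without loss.

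For the harder inclusion $\set{G}_d \preccurlyeq \set{G}_{\bar{d}}$, I would take an arbitrary attributed directed graph $g = (\set{V}, \set{E}, \alpha, \omega)$ with $\omega \abb \set{E} \rightarrow \set{C}$ and define $f(g) = (\set{V}, \set{E}', \alpha, \omega')$ by symmetrizing the edges, i.e.\ $\set{E}' = \{\{u,v\} \mid (u,v) \in \set{E} \text{ or } (v,u) \in \set{E}\}$, and absorbing the lost orientation into the attribute of the resulting undirected edge. Concretely, I would enlarge the attribute codomain to $\set{C} \times \{-1, +1\}$ and tag $\omega'(\{u,v\})$ with $+1$ when only $(u,v) \in \set{E}$, with $-1$ when only $(v,u) \in \set{E}$, and with the pair of both tagged attributes when both orientations are present. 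Node attributes are left untouched.

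The key point to check is injectivity, which is what makes this map an embedding rather than mere forgetting of structure: from the image I must be able to reconstruct $g$ uniquely. This works because the sign tags recorded in $\omega'$ determine, for each undirected edge, exactly which of $(u,v)$, $(v,u)$ (or both) belonged to $\set{E}$, and the first coordinate recovers the original edge attribute $\omega$. The one place that needs care is the symmetric case $(u,v),(v,u)\in\set{E}$: here a single undirected edge must carry two attribute values, so the codomain really must allow tuples, and I would have to confirm that the two tagged entries can be ordered canonically (e.g.\ by the sign) so that no information is conflated. This bookkeeping in the bidirectional case is the main obstacle.

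The reverse inclusion $\set{G}_{\bar{d}} \preccurlyeq \set{G}_d$ is essentially immediate and I would treat it briefly: every undirected edge $\{u,v\}$ is formally the symmetric pair of directed edges $(u,v)$ and $(v,u)$, so the map sending $\{u,v\} \mapsto \{(u,v),(v,u)\}$ and copying the edge attribute onto both orientations (with node attributes unchanged) is a bijective embedding of $\set{G}_{\bar{d}}$ onto the sub-type of symmetric directed graphs. Combining the two embeddings yields $\set{G}_d \approx \set{G}_{\bar{d}}$.
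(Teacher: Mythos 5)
Your proposal is correct and follows essentially the same route as the paper's proof: the paper also symmetrizes the edge set and tags each undirected edge's attribute with $+1$, $-1$, or the pair of both tagged attributes to record the lost orientation, and likewise dispatches the reverse direction by noting that an undirected edge corresponds to the symmetric pair of directed edges. Your added remark about fixing a canonical order (e.g.\ by sign, or by the natural order on node identifiers) in the bidirectional case is a sensible point of care that the paper leaves implicit.
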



\begin{lem}\label{eq:multi}
  The set of attributed multigraphs and attributed non-multigraphs are equally expressive, i.e.,
    $\set{G}_{m}\approx \set{G}_{\bar{m}}$.

  \begin{proof}
    Let $g\in\set{G}_{m}$ be an arbitrary multigraph of the form $g = (\set{V}, \set{E}, \alpha, \omega)$ with $\set{V}\subset\{\!|v\mid v\in\mathbb{N}|\!\},$ $\; {\set{E}\subseteq\{\!| (u,v)\mid u,v\in\set{V}|\!\},}\; $ ${\alpha\abb\set{V}\rightarrow\set{C},\; \omega\abb\set{E}\rightarrow\set{C}}$.
    \begin{compactenum}
    \item[$\preccurlyeq$:] Let $f\abb \set{G}_{m}\rightarrow \text{Im}(f)\subseteq\set{G}_{\bar{m}}$ be a mapping with $f(g)=g'=(\set{V}',\set{E}',\alpha',\omega')$ and $\set{V}' = \{v\mid v\in\set{V}\}$,\quad $\set{E}' = \{e\mid e\in\set{E}\}$, \quad $\alpha'(v)\mapsto (\alpha(v)\mid v \in \set{V})$ and  $\omega'(e)\mapsto (\omega(e)\mid e \in \set{E})$.   
    Since the multisets are ordered the function is injective.
    \item[$\succcurlyeq$:] This direction is straight forward since every set is a multiset.
    \end{compactenum}
  \end{proof}
\end{lem}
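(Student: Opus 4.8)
The plan is to establish $\set{G}_m \approx \set{G}_{\bar{m}}$ by proving the two directions $\set{G}_{\bar{m}}\preccurlyeq\set{G}_m$ and $\set{G}_m\preccurlyeq\set{G}_{\bar{m}}$ separately, in the same style as Lemmas~\ref{eq:hyper} and~\ref{eq:directed}. The direction $\set{G}_{\bar{m}}\preccurlyeq\set{G}_m$ I expect to be immediate: every ordinary set is a multiset in which each element has multiplicity one, so the map $(\set{V},\set{E},\alpha,\omega)\mapsto(\set{V},\set{E},\alpha,\omega)$ that merely reinterprets the underlying sets as ordered multisets is an injective, information-preserving embedding. Hence the content of the lemma lies entirely in the converse direction.

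For $\set{G}_m\preccurlyeq\set{G}_{\bar{m}}$ I would fix an arbitrary attributed multigraph $g=(\set{V},\set{E},\alpha,\omega)$ with ordered multisets $\set{V}\subseteq\lmm v\mid v\in\mathbb{N}\rmm$ and $\set{E}\subseteq\lmm (u,v)\mid u,v\in\set{V}\rmm$ and attribute functions into a space $\set{C}$. The idea is to collapse each block of identical copies into a single representative and push the lost multiplicity information into the attributes: define $f(g)=g'=(\set{V}',\set{E}',\alpha',\omega')$ where $\set{V}'=\{v\mid v\in\set{V}\}$ and $\set{E}'=\{e\mid e\in\set{E}\}$ are the underlying ordinary sets, and where $\alpha',\omega'$ map into the extended space of finite sequences over $\set{C}$ by sending each surviving node $v$ to the tuple $\alpha'(v)=(\alpha(v_1),\dots,\alpha(v_k))$ of the attributes of all $k$ copies of $v$ in $\set{V}$, and analogously $\omega'(e)$ to the tuple of attributes of all copies of $e$ in $\set{E}$. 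Since the original multisets are ordered, these tuples are canonically determined, and the multiplicity of each element is recovered as the length of its tuple.

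It then remains to check that $f$ is a bijection onto its image $\text{Im}(f)\subseteq\set{G}_{\bar{m}}$, and here the only real obstacle is injectivity. I would settle it by exhibiting an explicit left inverse: from $g'$ one reconstructs $g$ by replacing every node $v\in\set{V}'$ with $\mathrm{length}(\alpha'(v))$ ordered copies carrying the attributes listed in $\alpha'(v)$, and likewise for edges, which returns the original ordered multiset structure together with the original attribute functions. The ordering of the multisets is precisely what makes this inversion well defined — without it, a permutation of the copies' attributes would collapse distinct multigraphs onto the same tuple and destroy injectivity. Surjectivity onto $\text{Im}(f)$ holds by definition of the image, so $f$ is a bijective embedding and the two directions together yield $\set{G}_m\approx\set{G}_{\bar{m}}$. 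The main delicacy in the write-up is therefore the bookkeeping that the attribute space $\set{C}$ be closed under forming finite tuples, so that $\alpha'$ and $\omega'$ genuinely land in a legitimate attribute space.
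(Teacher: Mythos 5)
Your proposal is correct and takes essentially the same approach as the paper: both directions match, with the nontrivial one collapsing the ordered multisets $\set{V},\set{E}$ to their underlying sets and encoding the lost copies as ordered tuples of attributes, injectivity resting on the ordering of the multisets. Your explicit left inverse and the note on closing the attribute space under finite tuples merely spell out details the paper leaves implicit (the latter via its extended attribute space $\set{C}$).
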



\begin{lem}\label{eq:heterogeneous}
The set of attributed heterogeneous and attributed homogeneous graphs are equally expressive, i.e.,
    $\set{G}_{het}\approx \set{G}_{\overline{het}}$.

   \vspace*{-8pt}\begin{proof}
    Let $g\in\set{G}_{het}$ be an arbitrary heterogeneous graph of the form $g = (\set{V}, \set{E}, \alpha, \omega)$ with ${\set{V}\subset\mathbb{N}\times\set{S}},$ $\; {\set{E}\subseteq\{\set{V},\set{V}\}\times\set{R}},$ $ {\alpha\abb\set{V}\rightarrow\set{C},}$ $\; \omega\abb\set{E}\rightarrow\set{C}$.
    \begin{compactenum}
    \item[$\preccurlyeq$:] Let $f\abb \set{G}_{het}\rightarrow \text{Im}(f)\subseteq\set{G}_{\overline{het}}$ be a mapping with $f(g) = g'=(\set{V}',\set{E}',\alpha',\omega')$, where
      \begin{compactenum}
      \item $\set{V}' = \{v\mid (v,s)\in\set{V}\}$, \quad$\set{E}' = \{(u,v)\mid (u,v,r)\in\set{E}\}$,
      \item $\alpha'(v) \mapsto (\alpha((v,s)),s\mid (v,s)\in \set{V})\ \forall v\in \set{V}'$, and $\omega'(e) \mapsto (\omega((e,r)),r\mid (e,r)\in \set{E})\ \forall e\in \set{E}'$.
      \end{compactenum}
      The concatenation of the node and edge types to the attributes directly constitutes the injectivity of the mapping.
    \item[$\succcurlyeq$:] This direction is straight forward by introducing one node and one edge type since every homogeneous graph is a heterogeneous graph with just one node and edge type.
    \end{compactenum}
  \end{proof}
\end{lem}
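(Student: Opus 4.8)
The plan is to prove the two-sided expressivity relation $\set{G}_{het}\approx\set{G}_{\overline{het}}$ by establishing each inequality separately through a bijective embedding, exactly in the style of the preceding lemmas. The conceptual engine is that, once graphs are attributed, a node type $s\in\set{S}$ or an edge type $r\in\set{R}$ carries no more information than a categorical label; because the attribute space may be extended arbitrarily to an enlarged space $\set{C}$, every such label can be absorbed into the attribute functions $\alpha$ and $\omega$ without losing or creating information.

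For the direction $\set{G}_{het}\preccurlyeq\set{G}_{\overline{het}}$ I would build a map $f\abb\set{G}_{het}\rightarrow\text{Im}(f)\subseteq\set{G}_{\overline{het}}$ that forgets the type components in the structure and re-stores them inside the attributes. Starting from a heterogeneous $g=(\set{V},\set{E},\alpha,\omega)$ with $\set{V}\subset\mathbb{N}\times\set{S}$ and $\set{E}\subseteq\{\set{V},\set{V}\}\times\set{R}$, I would take $\set{V}'=\{v\mid (v,s)\in\set{V}\}$ and $\set{E}'=\{(u,v)\mid ((u,v),r)\in\set{E}\}$, and then enlarge the attribute codomain so that $\alpha'(v)$ records the pair $(\alpha((v,s)),s)$ and $\omega'((u,v))$ records $(\omega(((u,v),r)),r)$. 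This moves the entire type assignment from the structural layer into the (extended) attribute layer while leaving the bare node and edge skeleton homogeneous.

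The reverse inequality $\set{G}_{\overline{het}}\preccurlyeq\set{G}_{het}$ is immediate: a homogeneous graph is literally a heterogeneous graph over a one-element node-type set and a one-element edge-type set, so the inclusion that stamps every node and every edge with the unique available type is a trivially injective embedding. Combining the two inequalities yields $\set{G}_{het}\approx\set{G}_{\overline{het}}$, as claimed.

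I expect the main obstacle to be the injectivity (hence bijectivity onto the image) of the forward map $f$, which is endangered whenever the forgetting step merges structurally distinct typed objects: two nodes $(v,s_1),(v,s_2)$ sharing the integer $v$, or two multi-relational edges $((u,v),r_1),((u,v),r_2)$ with identical endpoints but $r_1\neq r_2$, would collapse onto a single homogeneous node or edge. The fix is to make the attribute bundle \emph{all} the type information attached to each surviving structural object --- i.e.\ $\alpha'(v)$ lists every pair $(\alpha((v,s)),s)$ with $(v,s)\in\set{V}$ and $\omega'((u,v))$ lists every pair $(\omega(((u,v),r)),r)$ with $((u,v),r)\in\set{E}$ --- and the crucial verification is that this bundling is reversible, so that the original typed sets $\set{V},\set{E}$ together with $\alpha,\omega$ are uniquely recoverable from $(\set{V}',\set{E}',\alpha',\omega')$. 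Establishing this reconstruction is what certifies $f$ as a bijection onto its image and completes the proof.
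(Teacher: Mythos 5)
Your proposal takes essentially the same route as the paper: the same forgetful map that strips the type components from $\set{V}$ and $\set{E}$ and re-encodes them in extended attributes for $\preccurlyeq$, and the same trivial one-type embedding for $\succcurlyeq$. Your explicit treatment of nodes $(v,s_1),(v,s_2)$ and edges with equal endpoints but distinct relations collapsing under the forgetting step is exactly what the paper's bundled notation $\alpha'(v) \mapsto (\alpha((v,s)),s\mid (v,s)\in \set{V})$ is meant to capture, so your write-up is, if anything, a more careful spelling-out of the same injectivity argument.
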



\begin{lem}\label{lem_dynamic_static}
  The set of attributed dynamic and attributed static graphs are equally expressive, i.e.,
${\set{G}_{dy}\approx \set{G}_{\overline{dy}}}$.

   \vspace{-8pt}\begin{proof}
   Let $G\in\set{G}_{dy}$
     \,be an arbitrary dynamic graph of the form $G = (\set{V}_t, \set{E}_t, \alpha_t, \omega_t)_{t\in[T]} $ with ${\set{V}_t\subset\mathbb{N}},$ $\; {\set{E}_t\subseteq\set{V}_t\times\set{V}_t,}$ ${ \alpha_t\abb\set{V}_t\rightarrow\set{C},\; \omega_t\abb\set{E}_t\rightarrow\set{C}\quad \forall t\in [T]}$.
    \begin{compactenum}
    \item[$\preccurlyeq$:] Let $f\abb \set{G}_{dy}\rightarrow \text{Im}(f)\subseteq\set{G}_{\overline{dy}}$ be a mapping with $f(G) = g'=(\set{V}',\set{E}',\alpha',\omega')$, where
      \begin{compactenum}
      \item $\set{V}' = \bigcup\limits_{t\in [T]}\set{V}_t $, \quad$\set{E}' =  \bigcup\limits_{t\in [T]}\set{E}_t$,
      \item $\alpha'(v) \mapsto ((\alpha_t(v),t)\mid v\in \set{V}_t)\ \forall v\in \set{V}'$, and $\omega'(e) \mapsto ((\omega_t(e),t)\mid e\in \set{E}_t)\ \forall e\in \set{E}'$.
      \end{compactenum}
      Despite the union of the node and edge sets, the function is injective due to the ordering of the graph attributes in the static graph.
    \item[$\succcurlyeq$:] 
    This direction is straight forward by introducing one time stamp since every static graph is dynamic with only one time stamp.
    \end{compactenum}
  \end{proof}
\end{lem}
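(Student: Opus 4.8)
The plan is to unfold the relation $\set{G}_{dy}\approx \set{G}_{\overline{dy}}$ into its two defining inclusions $\set{G}_{dy}\preccurlyeq \set{G}_{\overline{dy}}$ and $\set{G}_{\overline{dy}}\preccurlyeq \set{G}_{dy}$, following exactly the template used for the multigraph and heterogeneity lemmas (lem.~\ref{eq:multi}, lem.~\ref{eq:heterogeneous}). The reverse inclusion $\set{G}_{\overline{dy}}\preccurlyeq \set{G}_{dy}$ is immediate: a static graph is nothing but a dynamic graph observed at a single timestamp, so the embedding $(\set{V},\set{E},\alpha,\omega)\mapsto (\set{V},\set{E},\alpha,\omega)_{t\in\{t_0\}}$ for one fixed $t_0$ is trivially injective. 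The substance lies in the forward direction.

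For $\set{G}_{dy}\preccurlyeq \set{G}_{\overline{dy}}$, I would take an arbitrary dynamic graph $G=(\set{V}_t,\set{E}_t,\alpha_t,\omega_t)_{t\in[T]}$ and collapse the entire time axis into one static graph by forming the union of all node sets, $\set{V}'=\bigcup_{t\in[T]}\set{V}_t$, and the union of all edge sets, $\set{E}'=\bigcup_{t\in[T]}\set{E}_t$. The temporal information erased by these unions must then be pushed into the attributes: for each surviving node $v\in\set{V}'$ I would record the ordered sequence of time-stamped attribute values $\alpha'(v)=((\alpha_t(v),t)\mid v\in\set{V}_t)$, and analogously $\omega'(e)=((\omega_t(e),t)\mid e\in\set{E}_t)$ for each surviving edge. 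Because each sequence ranges over exactly those timestamps at which $v$ (resp.\ $e$) is present, the attribute trajectory simultaneously encodes the attribute history \emph{and} the birth/death pattern of every node and edge.

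The main obstacle---and the only step needing care---is injectivity of this map. I would establish it by exhibiting an explicit inverse on $\text{Im}(f)$: given a static graph in the image, the timestamp set $[T]$ together with each $\set{V}_t$ and $\set{E}_t$ is recovered by reading off the second coordinates of the attribute tuples, and each $\alpha_t,\omega_t$ by reading off the first coordinates at the matching timestamp. Here the ordering of the tuples (inherited from the ordering on $[T]$) is essential, since it guarantees that no information is lost when several timestamps share the same attribute value; this is the same mechanism that secured injectivity in lem.~\ref{eq:multi} and lem.~\ref{eq:heterogeneous}. The one subtlety I would flag explicitly is that a node or edge need not exist at every timestamp, so the reconstruction must rely on the timestamp set recorded \emph{per element} rather than on a global product with $[T]$; once this is noted, invertibility, and hence bijectivity of $f$ onto its image, follows, completing both inclusions.
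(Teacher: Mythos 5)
Your proposal is correct and follows essentially the same route as the paper's proof: the identical union construction $\set{V}'=\bigcup_{t\in[T]}\set{V}_t$, $\set{E}'=\bigcup_{t\in[T]}\set{E}_t$ with time-stamped attribute sequences $((\alpha_t(v),t)\mid v\in\set{V}_t)$, and the same trivial single-timestamp embedding for the reverse direction. Your explicit reconstruction of the inverse on $\text{Im}(f)$ merely spells out the injectivity argument the paper compresses into one sentence about the ordering of attributes, so the two proofs coincide in substance.
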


\paragraph{\textbf{Proof of Theorem \ref{theorem:attributesGraphTYpesEquallyExpressive}}}
  All attributed graph types are equally expressive.
  \begin{proof}
  \vspace*{-8pt}With lem.~\ref{lem_dynamic_static} dynamic and static graph types are equally expressive. In all other lemmata we did not distinguish between dynamic and static graphs. Therefore, by ring closure these types are equally expressive to the hypergraph type (cf.~lem.~\ref{eq:hyper}), the multigraph type (cf.~lem.~\ref{lem_multigraph_unattrib}), the (un)directed type (cf.~lem.~\ref{eq:directed}), and the heterogeneous type (cf.~lem.~\ref{eq:heterogeneous}).
    \begin{align*}
      &\underset{dynamic}{\underbrace{ \set{G}_{h} \overset{lem.~\ref{eq:hyper}}{\approx} \set{G}_{\overline{d}}  \overset{lem.~\ref{lem_multigraph_unattrib}}{\approx} 
        \set{G}_{m} \overset{lem.~\ref{eq:directed}}{\approx}  \set{G}_{d}  \overset{lem.~\ref{eq:heterogeneous}}{\approx} \set{G}_{het} }}  \\
      &\hspace*{3.1cm}\overset{lem.~\ref{lem_dynamic_static}}{\approx} \\
      &\overset{static}{\overbrace{ \set{G}_{h} \overset{lem.~\ref{eq:hyper}}{\approx} \set{G}_{\overline{d}}  \overset{lem.~\ref{lem_multigraph_unattrib}}{\approx} 
        \set{G}_{m} \overset{lem.~\ref{eq:directed}}{\approx}  \set{G}_{d}  \overset{lem.~\ref{eq:heterogeneous}}{\approx} \set{G}_{het} }}.
    \end{align*}
    
    \vspace{-12pt}
  \end{proof}
    

\section{Expressivity of Composed Graph Types}\label{sec_expressivity_composed_graphs}
In sec.~\ref{section_proofs_unattributed}, \ref{section_proofs_partially_attributed}, and \ref{section_proofs_attributed} we investigated the expressivity relation between graph types defined by a single graph property. However, as seen in def.~\ref{defn:StaticGraphsComb}, composed graph types can also be related to each other by their expressivity relation.
In this case, a bijective embedding between multiple composed graphs consists of the composition of the individual embeddings determined for single structural properties from sec.~\ref{section_proofs_unattributed}, \ref{section_proofs_partially_attributed}, and \ref{section_proofs_attributed}. The composed embedding is illustrated in fig.~\ref{fig:trafo_of_combined_graphs}.

Let $\set{G}_{1}$ and $\set{G}_2$ be two arbitrary (composed) graph types and $p_i$ for all indices $i \in [1, \ldots, j]$  the structural graph properties or a property describing an elementary graph from def.~\ref{defn:StaticGraphsElm} and \ref{defn:StaticStructuralProp}. Here, $j$ is the number of properties that have to be changed to come from $\set{G}_1$ to $\set{G}_2$. Further, let $t(g, p_i)$ be the embedding for adding or removing the property $p_i$ to or from a graph $g$.
Then, the following composition of single embeddings $t(g, p_i)$ describes one existent bijective embedding from graph $g \in \set{G}_{1}$ into the graph type $\set{G}_{2}$, namely
\begin{equation*}
t(g, [p_1, p_2,\ldots, p_j]) = t(g, p_1) \circ t(g, p_2) \circ \cdots \circ t(g, p_j) \in \set{G}_2.
\end{equation*}

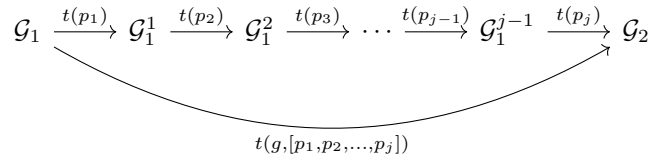
\begin{figure}[h!]
\begin{center}
\begin{tikzcd}
\set{G}_1\rar{t(p_1)}\arrow[black, bend right]{rrrrr}[black,swap]{t(g, [p_1, p_2,\ldots, p_j])}  
& \set{G}_1^1 \rar{t(p_2)}  
& \set{G}_1^2 \rar{t(p_3)}
& \cdots \rar{t(p_{j-1})}
& \set{G}_1^{j-1} \rar{t(p_{j})}    
& \set{G}_2
\end{tikzcd}
\end{center}
\caption{\label{fig:trafo_of_combined_graphs}embeddings between composed graphs as composition of single structural property embeddings.}
\end{figure}
Note that, having fixed the embedding functions for simple graph type embeddings, the order within the composition of embeddings is not unique. I.e. in particular, that the embedding from $\set{G}_1$ to $\set{G}_2$ is deterministic in the result but not in the process. 
\section{Conclusion}\label{sec:conclusion}
In order to clarify the modeling power of graphs, we evaluated the power to encode different properties, i.e. the expressivity power, for common graph types.
First, we compared unattributed graph types w.r.t.~their expressivity power. We proved that hyper- and non-hypergraphs are equally expressive, i.e., $\set{G}_{h}\approx \set{G}_{\bar{h}}$ and that heterogeneous graphs are at least as expressive as directed graph types, i.e., $\set{G}_d \preccurlyeq \set{G}_{het}$. 
For this purpose, we showed the existence of bijective embeddings between the graph types. Next, in the case of one attributed and one unattributed graph type, we could show that it holds $\set{G}_{\bar{a}} \preccurlyeq \set{G}_a$ for unattributed and attributed graphs. 
In contrast, for attributed and unattributed heterogeneous graphs and integer-attributed and unattributed multigraphs, it is $\set{G}_{a} \approx \set{G}_{\bar{a}, het}$, and ${\set{G}_{ia} \approx \set{G}_{\bar{a}, m}}$, respectively.
 Subsequently, we proved that attributed graphs of any graph type could encode any graph property leading to the equal expressivity power of attributed graph types, cf.~thm.~\ref{theorem:attributesGraphTYpesEquallyExpressive}. 
 Finally, we noted the expressivity relation between composed graphs resulting from the composition of existing embeddings between the attributed and unattributed graph types having just one graph property.

\section*{Acknowledgements}
This work was supported by the Ministry of Education and Research Germany (BMBF, grant number 01IS20047A). Further, we would like to thank Abdul Hannan, Franz Götz-Hahn, Jan Schneegans and Bernhard Sick for review of the manuscript and fruitful discussions.



\bibliography{bibliography/bibliography}

\bibliographystyle{unsrtnat}







\end{document}